\documentclass{article}
\usepackage[a4paper,margin=1in]{geometry}

\usepackage[T1]{fontenc}

\usepackage[usenames,dvipsnames,svgnames,x11names]{xcolor}

\newcommand{\delc}[1]{}

\usepackage[T1]{fontenc} 
\usepackage[normalem]{ulem} 
\usepackage{dsfont}

\usepackage{balance}

\usepackage[nocompress]{cite}

\AtBeginDocument{%
  \providecommand\BibTeX{{%
    \normalfont B\kern-0.5em{\scshape i\kern-0.25em b}\kern-0.8em\TeX}}}

\usepackage{listings}

\usepackage{moreverb}

\usepackage[nounderscore]{syntax}

\usepackage{amssymb}
\usepackage{mathtools}
\usepackage{amsfonts}

\usepackage{bm}
\usepackage{amsthm}

\usepackage{subcaption}
\usepackage{pifont}

\usepackage{lipsum}
\usepackage{algorithmicx}

\usepackage{multirow}
\usepackage{rotating}
\usepackage{booktabs}
\usepackage{colortbl}
\usepackage{tablefootnote}

\usepackage{makecell}

\usepackage{array}
\newcolumntype{L}[1]{>{\raggedright\let\newline\\\arraybackslash\hspace{0pt}}m{#1}}
\newcolumntype{C}[1]{>{\centering\let\newline\\\arraybackslash\hspace{0pt}}m{#1}}
\newcolumntype{R}[1]{>{\raggedleft\let\newline\\\arraybackslash\hspace{0pt}}m{#1}}

\usepackage[colorlinks,bookmarksopen,bookmarksnumbered,citecolor=red,urlcolor=blue]{hyperref}

\usepackage{xspace}
\let\labelindent\relax
\usepackage{enumitem}
\usepackage{blindtext}

\newtheorem{theorem}{Theorem}

\usepackage{url}

\usepackage[linesnumbered,ruled,vlined]{algorithm2e}
\SetKwInput{KwData}{Input}
\SetKwInput{KwResult}{Output}

\begin{document}

\title{
\LARGE \bf
Adaptive and Cost-Efficient Joint Scheduling of UAV Routes and Analytics with Transit-Borne Fog}

\author{Suman Raj, Arindam Khanda, Gagana M D, Yogesh Simmhan, and Sajal K. Das
\thanks{Raj is with the Department of Computer Science, University of Chicago, USA (sumanraj@uchicago.edu). Khanda is with the Department of Information Technology, Kennesaw State University, USA (akhanda@kennesaw.edu). Das is with the Department of Computer Science, Missouri University of Science \& Technology, USA (sdas@mst.edu). Simmhan is with the Department of Computational and Data Sciences, Indian Institute of Science (IISc), Bangalore, India (simmhan@iisc.ac.in). Raj and Gagana (gagana.md.work@gmail.com) were with IISc at the time of writing this paper. } 
} 

\date{}

\maketitle
\thispagestyle{plain}
\pagestyle{plain}

\begin{abstract} 
Unmanned Aerial Vehicles (UAVs) performing deadline-bound analytics over large rural areas cannot reliably offload workloads to sparse cellular base stations. We propose an approach that uses scheduled public buses as \textit{mobile fogs}: a UAV hands off data to a bus during a halt, and the bus carries it until its route enters cellular coverage. Because a public bus follows a fixed route and timetable, a handover depends on \textit{where} and \textit{when} the bus next reaches a cellular zone, rather than how near the stop is. 
We formulate a Mission Scheduling Problem over this model, co-scheduling UAV routes with the placement of each analytics task on the UAV edge, a stationary fog, or a bus, under deadline, energy, and cost constraints.
Our \textit{Divide and Assign} (DA) heuristic selects the cheapest halt that still meets a task's deadline.
Across 36 workload configurations in a rural region, derived from real cellular and transit data, DA achieves up to 20\% higher utility than the strongest heuristic and up to 41\% higher utility than the strongest adapted-prior scheduler, while incurring the lowest aggregate cost.
The transit tier handles up to 71\% of drop-offs, raises task completion to 100\%, and reduces recharging cycles by up to 31\%. Finally, in the presence of traffic variability, the adaptive variant recovers 93\% of the utility the delays cost and maintains completion rates above 97\%.

\end{abstract}

\setlength{\belowdisplayskip}{0pt} \setlength{\belowdisplayshortskip}{0pt}
\setlength{\abovedisplayskip}{0pt} \setlength{\abovedisplayshortskip}{0pt}

\section{Introduction}

Unmanned Aerial Vehicles (UAVs), or drones, are increasingly used for spatio-temporal sensing over large rural areas, such as crop monitoring, utility surveys and construction inspection~\cite{betti2024drone}. Much of this sensing is only useful if it is acted upon promptly: a sparking distribution line or a pest outbreak should raise an alert during the flight rather than after the fleet lands. Edge accelerators are small enough to fly on board. For instance, an NVIDIA Jetson Orin Nano offers $1{,}024$ Ampere CUDA cores and $8$\,GB of memory within $15$\,W and weighs under $200$ grams. 
However, on-board analysis competes with other flight operations: a quadcopter's motors draw $\approx750$\,W against a $20$--$45$~min endurance, so every second of inference is a second of airtime and a joule of battery. UAVs cannot analyze everything they sense, and the surplus must be \textit{offloaded}~\cite{mendula2024furcifer}.

\vspace{2pt}\noindent\textbf{The rural offloading gap.}
Dense urban cellular infrastructure often permits direct offloading to nearby edge or cloud resources, whereas rural missions may operate across large areas with sparse coverage.
Compute co-located with $4$G/$5$G base stations, which we call \textit{stationary fogs}, is sufficiently dense in major urban regions, as shown in Fig.~\ref{fig:motivation}. Rural airspace is the opposite regime. This leaves the agricultural interior, precisely where the sensing happens, thinly served. Flying to the nearest tower or deferring everything to the base station wastes the very flight time and energy that the deadline is competing for.
\textit{The instinct is to deploy more infrastructure.} We take the pervasive-computing route instead and leverage what already passes through the landscape on a published schedule: the rural bus. 
Public bus services in developing and developed regions already use onboard connectivity and, in some deployments, onboard compute~\cite{MyCitiBus2025,BMTC_Official} allowing suitably equipped buses to act as mobile fogs without deploying additional field infrastructure.

\begin{figure}[t]
    \centering
    \includegraphics[width=0.4\linewidth]{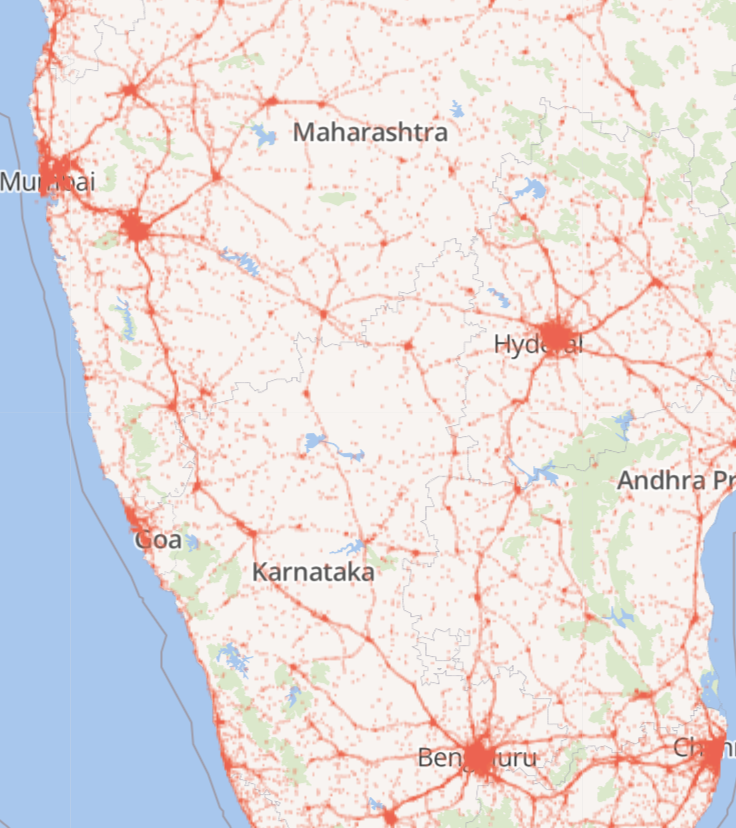}
    \caption{Cellular coverage across peninsular India (source: OpenCelliD~\cite{OpenCelliD}). Coverage clusters in cities and strings out along highways, leaving the agricultural interior thinly served; those same roads carry the bus routes we enlist as mobile fogs.}
    \label{fig:motivation}
\end{figure}

This goes beyond a fleet optimization problem. The provider owns the UAVs but not the buses; it cannot move a stop, hold a departure, or request capacity. It can only read a timetable and schedule the drone to be in the right place at the right moment, so the physical and temporal structure of an unowned public service becomes a hard input to the computation. Fig.~\ref{fig:overview} sketches the resulting mission.
A UAV leaves its depot, Observes (O) a field at rural Location~1, then flies to a bus stop and offloads the data (D1) over WiFi during the bus's halt. The bus Analyses (A) it en route and, on entering cellular range, relays the result via $s_2$ (a cell tower) to the cloud. Meanwhile, the UAV moves on to pick up (P) sensor data (D2) from a field sensor at Location~2, analyses it onboard while transiting to Location~3 for a further pickup (D3), and delivers the result of D2 and the raw D3 to the stationary fog $s_1$. The bus is thus both a compute tier and a \textit{data mule}, a transient fabric bridging the UAV edge to the stationary fogs and the cloud.

\vspace{2pt}\noindent\textbf{Why this is hard.}
Tasks are deadline-bound, e.g. detecting a cut cable during a survey, and where task can run depends on the model it needs: YOLOv8-nano fits on the UAV, a vision transformer does not. Resources price differently, an RTX 3090 fog costing about US\$$48\times10^{-6}$/sec amortised over three years against US\$$6.6\times10^{-6}$/sec for an Orin Nano, and cellular offload adds data charges that a WiFi handover to a bus avoids. Energy binds throughout, since a $20$--$45$~min endurance must cover flying, hovering, inference, and any diversion to recharge.
The fourth is specific to this setting and drives the model. The bus schedule is an input, not a decision variable, so a UAV that reaches a stop after the bus has pulled away has spent the flight for nothing. Moreover, the payload rides an itinerary the scheduler does not control, and surfaces only when that route next enters cellular range. 
Therefore, the value of a halt depends on both the UAV's \textit{handover time} and the bus's subsequent \textit{cellular-reentry time}, rather than on UAV-to-halt distance alone.

\begin{figure}[t]
    \centering
\includegraphics[width=0.65\linewidth]{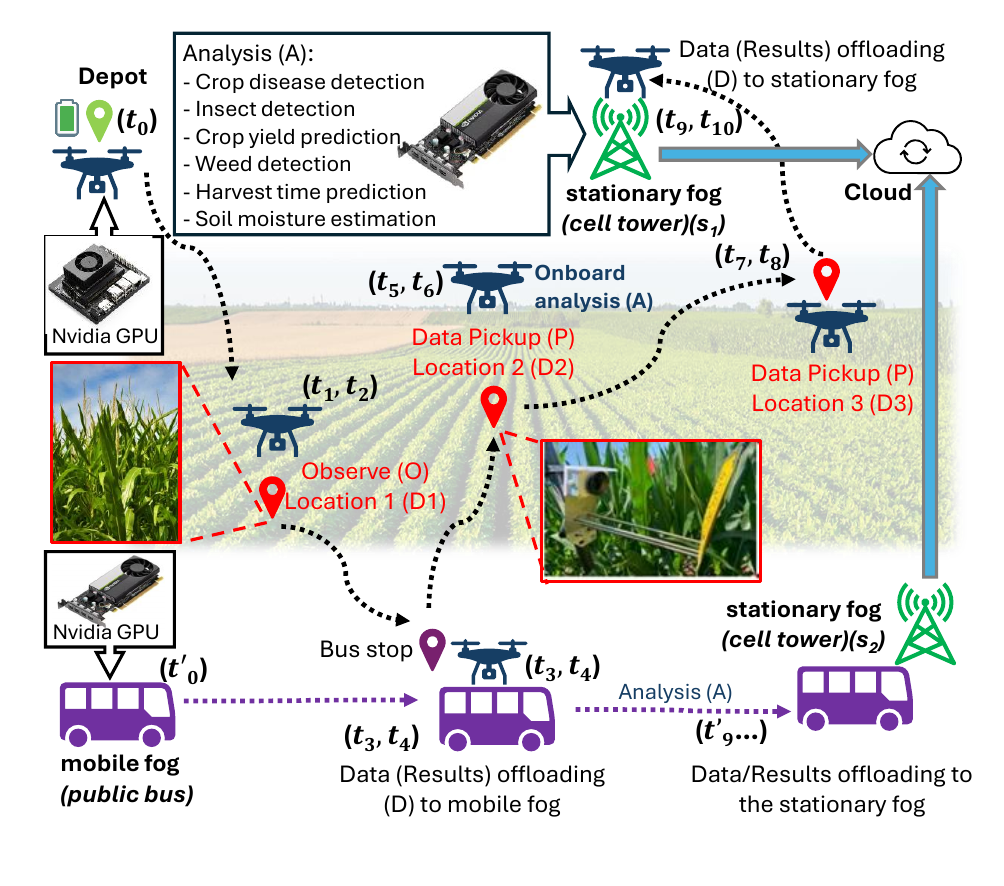}
    \caption{A rural mission across compute tiers: onboard UAV edge, stationary fog at cell towers, and mobile fog on scheduled buses that relay on entering coverage.}
    
    \label{fig:overview}
\end{figure}

\vspace{2pt}\noindent\textbf{Contributions.}
This paper studies mission planning for a UAV fleet operator offering \textit{drone analytics as a service} in remote regions. 
A user-requested \textit{task} may involve a drone capturing video while hovering at a waypoint \textit{(observing)}, collecting data from ground sensors \textit{(pickup)}, running \textit{analytics} on the acquired data, and \textit{offloading} raw or processed data onward. 
Tasks accrue benefit only if delivered within their deadline, and consume monetary cost as they run. This frames our problem: \textit{how do we co-schedule drone routes among waypoints, and place each analytics task on the UAV edge or offload it to a stationary or mobile fog, so that the fleet maximises accrued utility within each drone's energy budget and each task's latency constraint?} We make the following specific contributions.

\begin{enumerate}[leftmargin=*]
    \item We characterize the \textit{system}, \textit{application}, and \textit{latency, energy and cost} models, including a \textit{halt model} in which each halt is valued by its drop-off time $D(c)$ 
    derived from published bus timetables and base stations locations (\S~\ref{sec:models-and-assumptions}).
    \item We formally define the \textit{Mission Scheduling Problem (MSP)} over this model, co-scheduling routes and analytics for a fleet with heterogeneous edge and stationary/mobile fog compute, and prove it \textit{NP-hard} (\S~\ref{sec:problem-formulation}).
    \item We develop
    a scalable and adaptive heuristic, \textit{Divide and Assign (DA)}, which routes each drone over spatio-temporal task clusters and drops each sub-trip at the cheapest destination, that still meets the deadline (\S~\ref{sec:heuristics}).
    \item 
    We \textit{evaluate} DA against  EDF~\cite{stankovic1998deadline}, NWF~\cite{perez2013solving}, and adaptations of three recent schedulers, \textit{PBTO}~\cite{park2024public}, \textit{EOFO}~\cite{10146342} and \textit{TGTD}~\cite{li2026latency}, using a common utility, cost, deadline, and energy accounting model. The region is built end-to-end from open data, with stationary fogs from OpenCelliD, halts and road geometry from a public transit feed, and real DNN models profiled on hardware NVIDIA GPUs (\S~\ref{sec:evaluation}).
\end{enumerate}

\vspace{-0.1in}
\section{Related Work}\label{sec:related-work}

\subsection{Mobility and Task Offloading}
A substantial body of research has focused on offloading computations from edge devices to static fog or cloud~\cite{pournaropoulos2023supporting,sharma2023deep,10.1145/3529706.3529708}. In the context of mobile edges, MoDEMS~\cite{9796680} addresses the need to offload data from mobile edge to different servers for long-term user mobility and formulates cost minimization as an ILP. In~\cite{tang2020deep}, a Deep Reinforcement Learning (DRL) is adapted for offloading tasks from mobile edge to servers.
Some~\cite{10171496,raj2025adaptive} offload DNN tasks from edge to cloud while adapting to network variability due to mobility. 
Neither consider offloading to a mobile fog. 
A movable unmanned ground vehicle (UGV)-mounted mobile edge computing (MEC) server-assisted data acquisition and processing system for UAVs is proposed in~\cite{tang2025uav}. However, this model does not optimize UAV routes or account for proper task deadlines.
SLIM+~\cite{huang2026slim+} jointly optimizes the placement of ground vehicles and UAV fleet sizing to minimize deployment cost. 
Likewise, a heterogeneous graph RL policy is used in~\cite{wu2026ugv} to allocate UAV tasks with dispatchable UGV assistance. 
In contrast, public transit buses follow fixed routes and timetables, which makes the choice of \textit{which halt} to hand off at the central decision in our paper.

\vspace{-0.05in}
\subsection{Public Transit as a UAV Resource}

Recently, a growing body of work treats public buses as the ground resource for UAVs, motivated by their predictable mobility and public-service operation. The authors in~\cite{park2024public} proposed a public bus-assisted UAV task offloading scheme 
over Seoul bus data and maps. This work demonstrates the potential of public buses as mobile fog resources; however, it focuses on dense urban deployments and energy-delay optimization, whereas we address rural settings.
A second line of work exploits transit for \textit{energy} rather than computation. In~\cite{zhu2025effective}, the UAVs land on buses to recharge, and co-solve UAV trajectory planning alongside time-slot assignment of UAVs to points of interest under energy and coverage constraints, to extend system lifetime. Earlier work similarly rides buses to extend UAV range for on-demand monitoring~\cite{gao2022leveraging}. These works establish that transit networks are a viable substrate for UAV operations and, like us, plan routes against a fixed timetable. However, the bus is a \textit{carrier and charger}, not a compute- or data-carrying node, and they optimize coverage and lifetime rather than deadline-bound analytics utility. Complementarily, feasibility studies on bus-mounted edge servers~\cite{li2025busfeasible} and simulation support for UAV-integrated vehicular fog~\cite{wei2026airfogsim} corroborate the practicality of the transit-borne fog substrate that we assume.

\vspace{-0.05in}
\subsection{UAV Related Co-optimizations}
Co-scheduling of task analytics and UAV route planning has also been considered. The authors in~\cite{hao2024joint} jointly optimized UAV trajectories and task offloading in multi-UAV MEC using DRL, while several heuristics are developed in~\cite{10146342} for deadline-aware UAV routing and onboard task scheduling.
However, none of these consider task offloading to a fog. A three-layer computing architecture is proposed in~\cite{sun2024joint} for post-disaster rescue, integrating MEC and vehicle fog computing.
Unlike our approach proposed in this paper, they do not consider task deadlines, and their fog vehicles remain in the UAVs' vicinity without a fixed schedule.
Two recent works co-optimize a UAV waypoint visiting order with its offloading decisions in inspection settings that structurally resemble ours. \cite{guo2026joint} jointly plans trajectories and offloading over sensor clusters, minimizing a weighted latency-energy sum, while~\cite{li2026latency} co-optimizes visiting order, offloading, and resource allocation under deadline and energy budgets. However, they offload to a high-altitude platform cloud in~\cite{li2026latency}, making the offload target a capacity rather than a spatio-temporal decision like ours. 
In~\cite{gao2026cooperative}, UAVs with crowdsourced taxis are coordinated for instant delivery; however, unlike the taxis, which allow a \textit{detour}, our public bus does not allow any. 

In contrast, existing approaches either offload UAV tasks to buses without jointly planning the UAV mission or co-schedule routes against stationary or dispatchable compute tiers; we instead jointly select UAV routes, analytics placement, and handover halts using published bus timetables.

\vspace{-0.05in}
\section{Use Case and Representative Workflows}\label{sec:workflows}

\begin{figure}
    \centering
    \includegraphics[width=0.7\columnwidth]{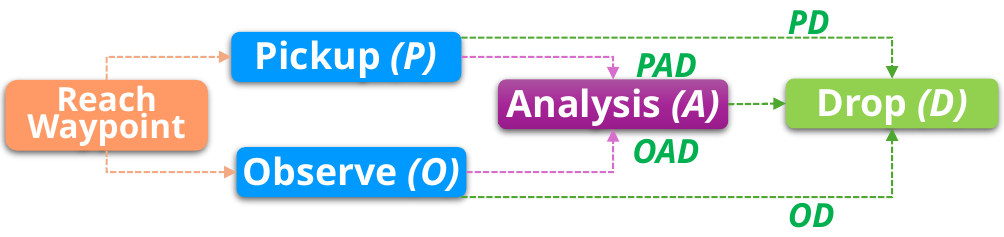}
    \caption{Workflow for different Task Types.}
    \label{fig:state-diagram}
\end{figure}

Satellite imagery and manual field visits sample farmland infrequently and at a high labor cost, whereas a UAV fleet with onboard cameras and DNN accelerators can estimate soil moisture or identify pest diseases far more often~\cite{petchiammal2023paddy,mavridou2019machine}. The value of such an estimate decays with delay, and at very different rates: a yield survey tolerates overnight batch processing, while precision irrigation tracks moisture that shifts within tens of minutes and corn rootworm spreads fast enough that a day's delay is measured in lost crop~\cite{gupta2024analyzing}.

The UAV fleet operator receives \textit{tasks} from users, each belonging to one of four \textit{types} (Fig.~\ref{fig:state-diagram}): \textit{Pickup Drop (PD)}, \textit{Observe Drop (OD)}, \textit{Pickup Analyse Drop (PAD)} and \textit{Observe Analyse Drop (OAD)}. A \textit{pickup} flies the drone to a waypoint to \textit{collect data} from a ground sensor, whereas an \textit{observe} has it \textit{hover} there for a specified duration to \textit{capture} sensor data. The user may additionally request \textit{analysis} over that data before it is \textit{dropped off} to the cloud through a fog. Where that analysis runs, and where the data is dropped off, yields the four orchestration scenarios of Fig.~\ref{fig:workflow-pd-od}. \textit{W1} offloads raw data to a stationary fog over 4G/5G with no analysis, realising \textit{PD}/\textit{OD} tasks and suiting delay-tolerant work such as crop-health logging. \textit{W2} analyses onboard and drops off only the much smaller result, fitting urgent but lightweight models such as irrigation-stress or rodent detection. \textit{W3} hands the raw data to a halting bus over WiFi, which analyses en route and uploads upon entering cellular coverage; it is available where no stationary fog is, and is the only workflow whose completion time is set by a timetable rather than by the UAV. \textit{W4} offloads to a stationary fog that analyses and uploads, the fallback when a model is too heavy for reliable onboard execution. \textit{W2}--\textit{W4} are thus three placements of the analysis step of one \textit{PAD}/\textit{OAD} task; choosing among them per task, jointly with the fleet routes, is the problem we formalize next.

\vspace{-0.05in}
\section{Models and Assumptions}\label{sec:models-and-assumptions}

\begin{figure}[!t]
    \centering
    \includegraphics[width=0.65\linewidth]{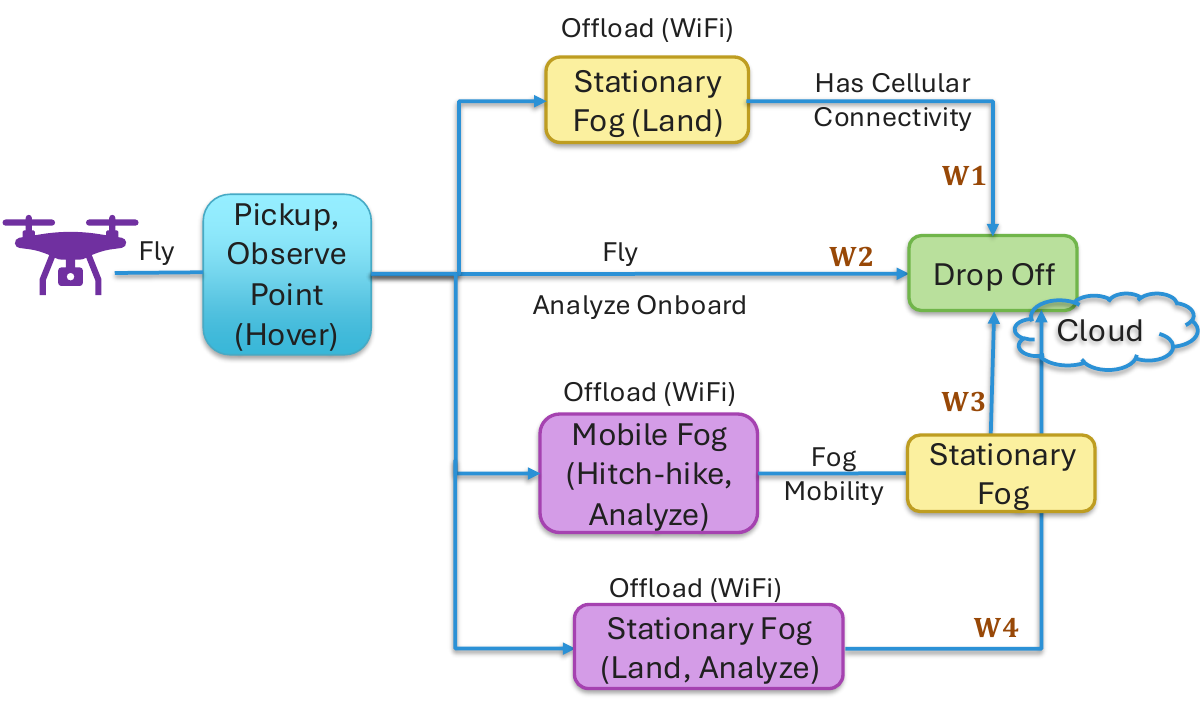}
\caption{Representative Workflows. } 
    \label{fig:workflow-pd-od}
\end{figure}

In this section, we present the system, halt and application models, followed by the latency, energy and cost models for sensing, analytics, communication and mobility. 

\vspace{3pt}\noindent\textbf{Assumptions.} 
UAVs are physically identical, though they may carry heterogeneous compute; only speed, battery and inference rate enter the schedule, so a mixed fleet changes only these numbers. They fly at a constant altitude, and we ignore the mechanics of take-off and landing, which cost the same on every sub-trip. A UAV runs one analysis task at a time, without preemption or checkpointing, although sensing or pickup for one task may overlap the compute of another.

\vspace{-0.05in}
\subsection{System Model}
Let $\mathds{U}$ = $\{u_1, u_2, \dots, u_i\}$ of UAVs managed by a \textit{fleet operator}, initially at a \textit{depot} located at $\Bar{\lambda}$ (latitude, longitude). Each UAV carries a camera and other sensors, GPS, and controls for autonomous flying, an onboard accelerated edge device, and full-duplex WiFi and cellular enabled. 
It hovers at a \textit{waypoint} ($\lambda_n$) to capture observational data or to pick up data from a ground sensor over WiFi. Its battery has a \textit{fixed energy capacity}, $\mathcal{B}$.
Let $\mathds{S}$ = $\{s_1, s_2, \dots, s_j\}$ be the \textit{stationary fogs} co-located at $4G/5G$ cell towers, each $\langle s_j, \hat{\lambda}_j \rangle$, with backhaul to the cloud. A UAV or a bus is \textit{within cellular range} of $s_j$ when inside a radius $R$ of $\hat{\lambda}_j$; we use $R = 1000$~m. One stationary fog serves as the {\em depot}, where the drones start and to which they must return at the end of mission.
Let $\mathds{M}$ = $\{m_1, m_2, \dots, m_k\}$ be the \textit{mobile fogs} carried on public buses. Each $m_k$ runs on a \textit{published timetable route}, halting at bus stops $\hat{\lambda}_k^l$ over windows $(t^h_s, t^h_e)$, represented by $\langle m_k, (\hat{\lambda}_k^1, t^h_s, t^h_e) \dots (\hat{\lambda}_k^l, t^h_s, t^h_e)\rangle$. 
During a halt, a bus \textit{collects data} over WiFi from a UAV perched on its rooftop, and later \textit{relays} it to the cloud once its onward route brings it within cellular range. Data collected by a UAV may thus be \textit{processed onboard}, or \textit{offloaded} to a stationary or mobile fog. The bus also enables UAV battery recharging during the data transfer.

\vspace{-0.05in}
\subsection{Halt and Drop-off Model}\label{sec:halt-model}
The no-detour rule of public transport makes the \textit{halt} the unit of choice. A halt $c = \langle m_k, \hat{\lambda}_k^l, t^h_s, t^h_e \rangle$ is one bus, at one stop, over one time window, and a UAV may hand data to $c$ only if it reaches $\hat{\lambda}_k^l$ before $t^h_e$ and the transfer completes within that window.
What separates one halt from another is not its stop but its bus's onward route. Let $\pi_k(t)$ denote the timetabled position of $m_k$ at time $t$, interpolated along its published route. The \textit{drop-off time} of halt $c$ is the first instant after the handover at which that route enters cellular range:
\begin{equation}\label{eqn:dropoff}
D(c) = \min \{\, t \geq t^h_e \;:\; \exists\, s_j \in \mathds{S},\; \mathcal{H}(\pi_k(t), \hat{\lambda}_j) \leq R \,\}
\end{equation}
The relay opportunity persists until the route leaves range at $\overline{D}(c)$, and the payload transfer must fit within $[D(c), \overline{D}(c)]$. If the remaining route never enters range, $D(c) = \infty$ and the halt is \textit{unusable}. Coverage is evaluated continuously along the route rather than only at stops, so a bus that merely drives past a tower between two stops can still relay. 

\vspace{-0.05in}
\subsection{Application Model}
Let $\tau$ = $\{\tau_1, \tau_2, \dots, \tau_N\}$ denote the set of $N$ \textit{user-defined tasks} to be completed, starting from time $\hat{t} = 0$, where $\tau_n$ 
is a tuple $\langle \psi_n, \lambda_n, t^s_n, \delta^s_n, p^s_n, \mu_n, \phi_n, \delta_n, \beta_n \rangle$: the task type $\psi_n$; the location $\lambda_n = (lat_n, long_n)$ of the ground sensor for \textit{PD/PAD} or of the observation point for \textit{OD/OAD}; the observation duration $t^s_n$ and its deadline $\delta^s_n$; the sensed payload size $p^s_n$ in MB; the DNN model $\mu_n$ used to analyse it and the vector $\phi_n$ of its analysis times on the available compute devices; the task deadline $\delta_n$; and the benefit $\beta_n$ of completing the task within $\delta_n$, reflecting its importance.
A task is {\em completed} if the collected and analyzed data are dropped off to the cloud by the deadline for \textit{PD/OD} and \textit{PAD/OAD} tasks, respectively.
Benefit accrues on \textit{delivery}, not on assignment.
If the total cost incurred by the task $\tau_n$ is $\kappa_n$ (\S\ref{sec:cost-model}), its utility $\gamma_n$ is:
\begin{equation}\label{eqn:utility}
\gamma_n =
\begin{cases}
\beta_n - \kappa_n&\text{task executes within $\delta_n$}, \\
- \kappa_n & \text{task executes but misses deadline},\\
0 & \text{task is dropped (not executed)}
\end{cases}
\end{equation}

\vspace{-0.05in}
\subsection{Latency Model}\label{sec:latency-model}
\vspace{-0.05in}
The \textit{sensing latency} ($t^s$) is the duration for which a UAV hovers at a waypoint ($\lambda_n$) to capture video; it is zero for \textit{PD/PAD} and nonzero for \textit{OD/OAD} tasks. The \textit{analysis latency} ($t^a$) depends on where the analysis runs.
An estimate of analysis latency is provided by a suite of per-device DNN benchmarks, over standard payload sizes, which we enclose in $\phi_i$ for each task.
The \textit{communication latency} ($t^c$) covers (i) $t^c_s = \frac{p^s_n}{R^D_n}$, the \textit{data transfer} from a ground sensor to the UAV over WiFi for \textit{PD/PAD} tasks, and (ii) $t^c_o$, the \textit{offload} from the UAV to a mobile fog over WiFi or a stationary fog via 4G/5G, which is $\frac{p^s_n}{R^U_n}$ for raw data and $\frac{p^o_n}{R^U_n}$ when only the processed result ($p^o_n$) is sent, for uplink and downlink bandwidths $R^U_n, R^D_n$ (Mbps). We assume a stable link of constant bandwidth during offloading.
The \textit{hover-wait latency} ($t^w$) is the time a UAV holds at a bus stop for its bus to pull in, $t^w = \max(0,\, t^h_s - t^{arr})$ for an arrival at $t^{arr}$. It is zero at a stationary fog, and zero if the UAV arrives after the halt has begun. 
When a task is offloaded to a halt $c$, the bus must still carry the payload to coverage. The \textit{relay latency} is $t^r = D(c) - t^h_e$: the interval from the end of the halt until the bus's route first enters cellular range. It is zero when the stop itself lies within $R$ of a tower, and analysis completes before the bus leaves the range, and for a stationary fog.

Let $v^u$ be the average horizontal speed of a UAV and $\mathcal{H}_{x,y}$ the Haversine distance~\cite{6815214} between locations $\lambda_x$ and $\lambda_y$. The \textit{mobility latency} for a task, from data pickup at $\lambda_x$ to a drop-off at a stationary fog $\lambda_z$ or a halt $\lambda_y$, is:
\begin{equation}\label{eqn:mobility-time}
t^m =
\begin{cases}
\frac{\max(0,\; \mathcal{H}_{x,z} - R)}{v^u}, & \text{stationary fog} \\
\frac{\mathcal{H}_{x,y}}{v^u} + t^w + t^r, & \text{mobile fog}
\end{cases}
\end{equation}
The UAV--tower link is cellular, so a UAV offloads on entering the footprint and flies only the excess beyond $R$. However, for the mobile fog, the drone needs to fly all the way to the halt, as a WiFi handover needs proximity.
The payload then reaches the cloud at $D(c)$, fixed by the timetable and independent of the UAV.
The \textit{overall latency} for a task is:
    $\mathds{T} = t^s + t^a + t^c + t^m$.

\vspace{-0.05in}
\subsection{Energy Model} \label{sec:energy-model}
The UAV draws on its battery capacity $\mathcal{B}$ across the operations of a task.
The \textit{sensing energy ($e^s$)} is consumed by the onboard camera only for \textit{OD} and \textit{OAD} tasks.
The \textit{analysis energy} ($e^a$) is consumed for analysis onboard the UAV edge accelerator and given by $e^a = t^a \times \mathds{V} \times \mathds{P}_a$, where $\mathds{V}$ is the CPU/GPU utilization and $\mathds{P}_a$ is the rated power of the edge resource.
The \textit{communication energy} ($e^c$) consumed during data transfer is given by $e^c = t^c \times \mathds{P}_c$, where $\mathds{P}_c$ is the power consumed during communication. The \textit{mobility energy} $e^m = e^f + e^h$ covers flight to a waypoint ($e^f$) and hovering to sense, pick up, or wait out $t^w$ at a stop ($e^h$), from the actual distances and durations; it is shared among all tasks active on the drone while it flies.
So, the \textit{cumulative energy consumption} for a task is:
    $\mathds{E} = e^s + e^a + e^c + e^m$.

UAVs may recharge their batteries once the residual charge falls below a fraction $\theta$ of $\mathcal{B}$. Every fog is a recharge point: a stationary fog and a bus halt. 
Separately, a drone whose residual charge falls below a reserve $\rho\mathcal{B}$, or that cannot afford its next hop, is diverted to the nearest reachable recharge point.

\vspace{-0.05in}
\subsection{Cost Model} \label{sec:cost-model}
The fleet operator charges the user per task. The \textit{sensing cost} $c^s = c^s_{\bullet} \times t^s$ applies to \textit{OD/OAD} tasks and is zero for \textit{PD/PAD}. The \textit{analysis cost} is $c^a = c^a_{\bullet} \times t^a$, at the per-unit-time rate of whichever edge or fog device runs it. The \textit{communication cost} is $c^c = c^c_{\bullet} \times t^c$, for a per-MB transfer rate $c^c_{\bullet}$; it is zero for a handover to a bus, which is over WiFi. The \textit{mobility cost} is $c^m = c^e_{\bullet} \times e^m$, at a pro-rata rate per unit of UAV energy, covering flight and all hovering including $t^w$. The \textit{total cost} of a task is $\kappa = c^s + c^a + c^c + c^m$; we ignore fixed costs of operating the captive fleet. Rates come from 
provider price lists, or literature~\cite{raj2025adaptive}.

\vspace{-0.05in}
\section{Mission Scheduling Problem Formulation}\label{sec:problem-formulation} 

A UAV fleet operator \textit{queues} incoming tasks and periodically plans a \textit{mission schedule} that serves some or all of them with the whole fleet, so as to maximise utility.

A \textit{drone trip} begins and ends at the depot and comprises several \textit{sub-trips}, each running from one fog to another through a sequence of task waypoints. The route of the $r^{th}$ sub-trip of drone $u_i$ is $R^r_i = (\hat{\lambda}_p, \lambda^r_{i_1},\dots,\lambda^r_{i_{n}},\hat{\lambda}_q)$, with each intermediate $\lambda^r_{i_n}$ the waypoint of a task $\tau_n$. The first sub-trip departs the depot and the last returns to it, and consecutive sub-trips are contiguous, $R^r_i[-1] = R^{r+1}_i[0]$. A drone ending a sub-trip at any fog, a stationary point, or a bus halt, may recharge its battery if its residual charge is $< \theta\mathcal{B}$ (\S~\ref{sec:energy-model}).

Let $T_i^r \subseteq \tau$ be the ordered tasks of that sub-trip, where $\Bar{t}$ and $\Bar{\Bar{t}}$ bound a pickup or observation. Their sensing intervals do not overlap, $\Bar{\Bar{t}}_{i_n}^r \leq \Bar{t}_{i_{n+1}}^r$; each task is served by at most one drone, $T^r_i \cap T^{r'}_{i'} = \varnothing$; and infeasible/unprofitable tasks are dropped, so $\sum_i \sum_r \mid{T^r_i}\mid \leq |\tau|$. With $\mathcal{F}(\cdot,\cdot)$ the mobility time of Eq.~\ref{eqn:mobility-time}, a drone leaving $\lambda^r_{i_n}$ at $d^r_{i_n}$ reaches next waypoint at $a^r_{i_{n+1}} = d^r_{i_n} + \mathcal{F}(\lambda^r_{i_n},\lambda^r_{i_{n+1}})$, hovers to capture data, and must arrive by $\Bar{t}^r_{i_n} = \delta^{s}_{n} - t^{s}_{n}$ or the sensing deadline is already lost.

Further, the data handover to a mobile fog must fit inside the halt. Writing $\hat{a} = \max(a^r_{ic},\, t^h_s(c))$ for when the transfer can begin at a halt $c$, a drone arriving early hovers for $t^w$ until the bus pulls in, we require $\hat{a} + t^c_o \leq t^h_e(c)$; a drone reaching the stop after $t^h_e(c)$ has missed the bus and wasted the flight. Finally, a task completes only when its payload reaches the \textit{cloud} within $\delta_n$, not when it is handed over. At a stationary fog $s_j$ the drone need only enter the tower's range, giving $a^r_{ij} \leq \delta_n - t^a_{j_n} - t^c_o$; at a halt, delivery waits on the bus, so any analysis aboard must finish by $D(c)$ and $D(c) \leq \delta_n$.
Since $D(c)$ is fixed by the timetable rather than by the flight, this is the one deadline the scheduler cannot buy down, and it is what makes a farther halt with an imminent relay preferable to a nearer one without.

For a task needing analysis, $o^{r}_{i_n} \in \{0,1\}$ selects onboard execution over offloading; an onboard task occupies a slot $[\theta_{i_n}, \theta_{i_n} + t^a_{i_n})$ that may begin only after capture and may not overlap another on the same drone. The trip ends at the depot once the last task of the last sub-trip has been dropped off. Each task earns $\gamma_n$ by Eq.~\ref{eqn:utility}, and MSP maximises the \textit{total utility} $\gamma = \sum_{n} \gamma_n$, subject to the above and to the per-sub-trip energy budget $\mathcal{B}$.

\subsection{NP-Hardness of MSP}
\vspace{-0.03in}
The \textit{optimization objective} of MSP is given by:
\begin{equation}\label{eq:msp_objective}
    \arg \max_{R} \sum_{\tau_n\in \tau} \gamma_n \notag
\end{equation}
It needs to assign UAVs to task waypoints and decide whether to compute onboard or offload to fogs to maximize the utility from task completion, subject to the constraints:
$\mathds{T}_{n} < \Bar{\delta}_{n}$ and 
$\mathds{E}_{n} < \mathcal{B}$.
These two constraints imply tasks must be completed \textit{before their effective deadline}, and a sub-trip must end \textit{before the drone battery drains}. 

\begin{theorem}
MSP is NP-Hard.
\end{theorem}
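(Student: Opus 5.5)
We prove the theorem by a polynomial-time reduction from a known NP-hard problem to a restricted special case of MSP. Since every restricted case is an instance of MSP, hardness of the special case gives hardness of MSP. The natural source is the \emph{Orienteering Problem} (equivalently, the prize-collecting TSP with a length budget). It is NP-hard because it contains the Hamiltonian cycle / TSP decision problem. Its shape matches the structure of MSP: a single vehicle leaves a depot, visits a subset of weighted nodes, and returns to the depot within a budget, so as to maximise the collected prize.

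\textbf{Construction.} Given an Orienteering instance with nodes $v_1,\dots,v_N$, depot $v_0$, prizes $b_n$ and budget $L$, build the following MSP instance.
\begin{enumerate}[leftmargin=*]
\item Use one UAV, $|\mathds{U}|=1$.
\item Use no mobile fogs, $\mathds{M}=\varnothing$, and a single stationary fog that is also the depot, placed at $v_0$.
\item Create one task $\tau_n$ of type \emph{PD} at each node $\lambda_n=v_n$, with $t^s_n=0$ and no analysis, so there is no onboard or offload decision.
\item Set benefit $\beta_n=b_n$, deadlines $\delta_n$ and $\delta^s_n$ large enough to be non-binding, and battery $\mathcal{B}$ equal to the energy of flying distance $L$ at speed $v^u$.
\item Make every cost rate zero ($c^s_\bullet=c^c_\bullet=c^e_\bullet=c^a_\bullet=0$), so $\gamma_n=\beta_n$ for every served task. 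Set payloads $p^s_n\to 0$ (or any fixed small value absorbed into the budget), so pickup and communication time and energy vanish.
\item Put the cellular radius $R$ below half the minimum pairwise distance, so the tower footprint never covers a task waypoint. This makes the $\max(0,\mathcal{H}-R)$ discount in Eq.~\ref{eqn:mobility-time} a uniform shift. Alternatively, take $R\to 0$ in the restricted case, since $R$ is a parameter of the model.
\item Place the nodes in the plane with Haversine-approximated Euclidean distances. Euclidean TSP/Orienteering remains NP-hard, so the metric restriction costs nothing.
\end{enumerate}

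\textbf{Correspondence.} Under this instance, a feasible mission schedule is a closed depot-to-depot route through a subset of waypoints whose flight energy is at most $\mathcal{B}$. Its total utility is the sum of the prizes of the visited nodes. Recharging could split the trip, so either disable it by setting $\theta=0$ and $\rho=0$, or note that with a single fog at the depot any intermediate sub-trip boundary returns to $v_0$. In the second case, bound the \emph{whole} trip energy instead, by choosing the energy budget so that recharging is never triggered.

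Hence the instance admits a schedule of utility $\ge K$ if and only if the Orienteering instance has a tour of length $\le L$ with prize $\ge K$. The construction is clearly polynomial, so MSP is NP-hard.

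\textbf{Main obstacle.} The difficult step is neutralising MSP features that could let a schedule do better than any tour:
\begin{enumerate}[leftmargin=*]
\item recharging, which splits the trip into sub-trips each with its own budget $\mathcal{B}$;
\item the $R$-discount on reaching a fog;
\item the fact that the battery constraint is stated per sub-trip, not per trip.
\end{enumerate}
I would first try setting $\theta=\rho=0$ so the drone never recharges, and verify from the definitions that the single trip is then a single sub-trip. If the model forces recharging at the depot, the fallback is to encode the budget through time instead of energy. Give all tasks a common deadline $\delta_n = L/v^u$ on drop-off; every task drops at the depot, which is the only fog. Then require the final return time to be at most $L/v^u$, which recovers exactly the Orienteering length constraint.
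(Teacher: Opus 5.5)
Your reduction is sound, but it takes a different route from the paper's. The paper reduces from VRP. It uses one \emph{PD} task per customer, relaxed windows, one sub-trip per drone, the depot as the only fog, $\mathcal{B}$ set to the vehicle capacity, and only the mobility cost left nonzero. Maximising $\sum_n \gamma_n$ is then read as minimising total routing cost through the $c^m$ term of $\kappa_n$. You instead zero every cost rate, carry the objective entirely in the benefits $\beta_n$, and let the battery (or a common deadline) act as a length budget, reducing from Orienteering with a single UAV.

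Your route fits MSP more tightly: dropping tasks is exactly Orienteering's subset selection, and a battery is exactly a distance budget. The paper, by contrast, implicitly needs every task to be served. Otherwise $\sum_n \beta_n$ is not constant and utility maximisation is not cost minimisation. Its ``$\mathcal{B}$ = vehicle capacity'' also has no real counterpart once payloads are negligible, since VRP capacity bounds load rather than route length. You get hardness already for one drone with zero costs. You also confront recharging, the $R$-discount and the per-sub-trip budget explicitly, where the paper simply imposes one sub-trip per drone. What the paper's version exercises that yours does not is the fleet-partitioning and cost side of the objective.

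Some of your hedges need tightening.
\begin{itemize}
\item The ``uniform shift'' for $R>0$ holds only with a single return to the depot. Each intermediate return earns another discount $R$, because the next sub-trip departs from the fog location itself, and this can make depot detours profitable. So take $R=0$.
\item Take $p^s_n=0$ exactly. A positive payload adds pickup time proportional to the number of tasks served, which is constant only in the TSP case $K=N$.
\item ``Choosing the energy budget so that recharging is never triggered'' fails for $\theta>0$. Either $\mathcal{B}$ exceeds the energy of length $L$ and longer tours become feasible, or a drone can return below $\theta\mathcal{B}$ mid-mission, recharge, and continue.
\item Your time encoding is the robust version and should lead. Use $R=0$, a common deadline $L/v^u$, and a non-binding $\mathcal{B}$. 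The last on-time drop-off is then itself an arrival at the depot, so the on-time tasks lie on a closed walk of length at most $L$, which shortcuts to a tour. This holds whatever the recharge or sub-trip semantics, and no separate return-time constraint (which MSP lacks) is needed.
\item MSP measures Haversine rather than Euclidean distance, so ``the metric restriction costs nothing'' needs a gap argument. One option is Hamiltonian cycle in grid graphs: Hamiltonian tours have length $N$ and all others at least $N-1+\sqrt{2}$, and this gap survives placing the grid in a small enough patch of the sphere. The paper's proof, which starts from an arbitrary VRP cost matrix, has the same issue in a stronger form.
\end{itemize}
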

\vspace{-0.15in}

\begin{proof}
We reduce the Vehicle Routing Problem (VRP), known to be NP-hard~\cite{lenstra1981complexity}, to MSP. Given a VRP instance over customers and a depot, construct in polynomial time an MSP instance with one \textit{PD} task of negligible payload per customer, relaxed time windows ($\Bar{t} = 0$, $\Bar{\Bar{t}} = +\infty$), one sub-trip per drone, the depot as the only fog, and $\mathcal{B}$ set to the vehicle capacity. Sensing, analysis and communication costs vanish, so each $\kappa_n$ reduces to the mobility cost and maximising $\sum_n \gamma_n$ is exactly minimising the total routing cost; an optimal MSP schedule thus yields an optimal VRP solution and vice versa. Since the general MSP further admits time windows, multiple sub-trips and a choice of fogs, MSP is NP-hard.
\end{proof}

\vspace{-0.1in}
\section{Divide and Assign (DA) Heuristic for MSP}\label{sec:heuristics}
Since MSP is NP-hard, we schedule greedily rather than exactly. We propose \textit{Divide and Assign} (DA), which partitions tasks by location and deadline, then assigns each to the cheapest UAV that can still deliver it, choosing a drop-off point as it goes. It has two steps and a pre-processing stage.

\textbf{Step 0 (Pre-process):} Two structures are built once per mission. Over the stationary fogs, a \textit{K-D tree} answers nearest-tower queries in logarithmic time on average~\cite{friedman1977algorithm}. The drop-off time $D(c)$ and relay window of every halt are computed from  published timetables and tower positions (\S\ref{sec:halt-model}), so no route geometry is recomputed during assignment.

\setlength{\textfloatsep}{0pt}
\begin{algorithm}[t!]
    \caption{\textbf{Divide and Assign (DA)}}
    \label{algo:DA}
    \small
    \DontPrintSemicolon
    \KwIn{$\mathds{S}, \mathds{M}, \mathds{U}, \tau$}
    \KwOut{Assigned tasks for each UAV $u_i$}

    \tcc{Step 0: Pre-processing}
    Build a K-D tree over the stationary fogs $\mathds{S}$.\;
    Precompute $D(c)$ and the relay window of every halt $c \in \mathds{M}$.\;

    \tcc{Step 1: Spatio-temporal Clustering}
    $\mathbb{G} \gets STDBSCAN(\tau)$\;
    Sort tasks $\tau_n \in g$ in ascending order of their deadlines $\delta_n$.\;
    Sort groups $g \in \mathbb{G}$ by their earliest deadline.\;

    \tcc{Step 2: Task to UAV assignment}
    \For{each $g \in \mathbb{G}$}{
        \For{$\tau_n \in g$}{
            Initialize the assigned drone for the task to $\varnothing$\;
            Initialize min cost $\kappa_{min}$ to $\infty$\;
            \For{$u_i \in \mathds{U}$}{
                $\kappa_{i,n} \gets c_i^s + c_i^a + c_i^c + c_i^m$\;
                \If{ $\kappa_{i,n} < \kappa_{min}$}{
                    $f \gets$ cheapest admissible halt in $\mathds{M}$, else the nearest stationary fog\;
                    \If{$f = \varnothing$, or $u_i$ misses a deadline or lacks charge for $\tau_n$}{
                        Divert $u_i$ to the nearest reachable fog to drop off, and  recharge battery if below $\theta\mathcal{B}$.\;
                    }
                    \Else{
                        Update the assigned drone to $u_i$.\;
                    }
                }
            }
            \If{Onboard analysis cost is lower than at $f$ and onboard analysis satisfies both time and energy constraints}{
                Schedule $\tau_n$ for analysis at the assigned drone.\;
            }
            \Else{
                Analyze $\tau_n$ at the drop-off fog $f$ of the drone's subtrip.\;
            }

        }
    }
    For all drones drop any pending payload at their chosen fog and return to the depot.\;
\end{algorithm}

\textbf{Step 1 (Divide):} Close tasks may carry very different deadlines, while tasks sharing a deadline may be far apart, so we cluster on both axes with \textit{ST-DBSCAN}~\cite{birant2007st}. Within each cluster $g \in \mathbb{G}$ tasks are sorted by deadline, and clusters by their earliest deadline, $\min(\{\delta_n : \tau_n \in g\})$.

\textbf{Step 2 (Assign):} Taking clusters in that order, each task goes to the UAV of least total cost $\kappa_{i,n} = c^s_i + c^a_i + c^c_i + c^m_i$ among those that can reach the waypoint before its sensing deadline, deliver the payload by $\delta_n$, and do so within residual charge. An analysis task runs onboard when that is cheaper than fog compute and both time and energy allow; otherwise it is analysed at the fog that ends the sub-trip.

\textbf{Choosing the drop-off:} This is where the halt model enters, and it is what separates DA from a proximity-driven scheduler. A halt $c$ is \textit{admissible} only if the UAV reaches the stop before the bus departs, the handover fits inside the window, the onward route does enter coverage, and the delivery it implies both meets the deadline, $D(c) \leq \delta_n$, and falls inside the relay window $[D(c), \overline{D}(c)]$. DA takes the cheapest admissible halt by flight and hover wait, $c^e_{\bullet}(e^f(c) + e^h(t^w(c)))$, with $D(c)$ breaking ties, since the WiFi handover is free and on-bus compute is halt-independent. It keeps whichever is cheaper, that halt or the nearest stationary fog. Ranking by cost rather than proximity is what lets a farther stop with an imminent relay beat a nearer one whose bus stays out of coverage.

\textbf{Battery:} A drone below the reserve $\rho\mathcal{B}$, or unable to afford its next hop, is diverted to the nearest reachable fog to drop off, and recharges there if below $\theta\mathcal{B}$ (\S\ref{sec:energy-model}). A recharge needs contact, whereas offloading to a tower needs only cellular range, so a drone needing a charge flies the full distance, whereas one merely dropping off does not.

\vspace{-0.05in}
\subsection{Time Complexity}
\vspace{-0.05in}
Building the stationary-fog K-D tree takes $\mathcal{O}(|\mathds{S}| \log|\mathds{S}|)$ time, and precomputing $D(c)$ over all halts is linear in $|\mathds{M}|$ given the timetables. ST-DBSCAN partitions the tasks in $\mathcal{O}(|\tau| \log|\tau|)$ and fixes the group count $|\mathds{G}|$ automatically~\cite{birant2007st}; the sorting within and across groups adds $\mathcal{O}(|\tau| \log\frac{|\tau|}{|\mathds{G}|} + |\mathds{G}| \log|\mathds{G}|)$.

In Step~2, each of the $|\tau|$ tasks is tried against $|\mathds{U}|$ drones. A nearest-tower query costs $\mathcal{O}(\log|\mathds{S}|)$, but the halts must be scanned linearly at $\mathcal{O}(|\mathds{M}|)$: admissibility turns on timing rather than position, so no spatial index can prune it. Step~2 therefore takes $\mathcal{O}(|\tau| |\mathds{U}| (|\mathds{M}| + \log|\mathds{S}|))$, and Algorithm~\ref{algo:DA} takes
$\mathcal{O}(|\tau| |\mathds{U}| (|\mathds{M}| + \log|\mathds{S}|) + |\tau| \log|\tau| + |\mathds{S}| \log|\mathds{S}|)$ time.
The linear scan is the price of the no-detour model, and it is cheap in practice, as our regions carry at most $100$ halts. The clustering cost is incurred offline during periodic mission planning, so it does not affect real-time flight operations.

\vspace{-0.05in}
\subsection{Adapting to Traffic}
\vspace{-0.05in}

A rural bus is often late, and a halt that was admissible when the sub-trip was planned can therefore be infeasible by the time the UAV arrives. Since the only action left is to hover, at $700$~J/s compared to $750$~J/s for flight, a UAV that waits too long leaves its payload and charge stranded.
\textit{Adaptive DA (ADA)} extends DA with a real-time transit feed. Planning is unchanged; what changes is that a delay observed at execution shifts $t^h_s$, $t^h_e$, and $D(c)$ of the affected halt together, and DA's own admissibility test is re-evaluated against the shifted values. Both variants we implement poll the feed at the same instant, $\tau$ seconds before the UAV would reach the halt it is committed to. They differ in \textit{how long the correction survives}.

\textit{ADA-Local} uses what it learns once and then discards it. If the delayed halt is still admissible, the UAV proceeds and simply hovers longer; if not, it re-runs the drop-off rule over every other admissible halt and the nearest stationary fog, from its position along the leg and with the flight energy already spent deducted. Routing and task assignment are untouched, and the corrected times are not carried forward, so the next sub-trip is planned against the published timetable again and can commit to the same late bus. The repair is $\mathcal{O}(|\mathds{M}|)$ per affected sub-trip and adds no additional planning cost.
On the contrary, \textit{ADA-Global} keeps it. The same poll writes the corrected times into the fleet's shared view of the timetable, where they persist, so every later decision works from them: not only subsequent drop-off choices, but the admissibility test DA applies when it \textit{assigns} a task in the first place, which ADA-Local never reaches. A UAV, therefore, stops committing to halt the fleet that is already known to be late, instead of repeatedly discovering them en route.

\section{Experimental Study} \label{sec:evaluation}
\subsection{Implementation and Experimental Setup}
\begin{table}[t]
\setlength{\tabcolsep}{2.5pt}
\caption{Edge and mobile fog computing devices used in evaluations. }
\label{tab:computespecs}
\centering
\begin{tabular}{l|r|r|r}
\hline
\textbf{Feature} & \bf {Orin AGX} & \bf {Xavier NX } &\bf{Orin Nano} \\
\hline
\hline
GPU Architecture & Ampere & Volta & Ampere\\
\noalign{\global\arrayrulewidth=0.1pt}\arrayrulecolor{lightgray}\hline
\noalign{\global\arrayrulewidth=0.4pt}\arrayrulecolor{black}
\# CUDA/Tensor Cores & 2048/64 & 384/48 & 1024/32\\
\noalign{\global\arrayrulewidth=0.1pt}\arrayrulecolor{lightgray}\hline
\noalign{\global\arrayrulewidth=0.4pt}\arrayrulecolor{black}

RAM (GB) & $32$ & $8$ & $8$ \\ \hline
\noalign{\global\arrayrulewidth=0.1pt}\arrayrulecolor{lightgray}\hline
\noalign{\global\arrayrulewidth=0.4pt}\arrayrulecolor{black}
Energy Usage (J/s) & 60 & 15 & 15\\ \hline
Form factor (mm) & $110 \times 110 \times 72$ & $103 \times 90 \times 35$ & $100 \times 79 \times 21 $\\
\noalign{\global\arrayrulewidth=0.1pt}\arrayrulecolor{lightgray}\hline
\noalign{\global\arrayrulewidth=0.4pt}\arrayrulecolor{black}
Weight (g) & 872.5 & 174 & 176\\ \hline
Price (INR) & $2,05,000 (\$2360) $  & $40,000 (\$460)$ & $54,500 (\$630)$\\
\noalign{\global\arrayrulewidth=0.1pt}\arrayrulecolor{lightgray}\hline
\noalign{\global\arrayrulewidth=0.4pt}\arrayrulecolor{black}
Unit Cost (INR/s) & $2.16e^{-3}$  & $4.2e^{-4}$ & $5.7e^{-4}$\\
\hline
\end{tabular}
\end{table}

\begin{table}[t]
\small
\setlength{\tabcolsep}{1.5pt}
\caption{Unit cost (INR/s), energy (J/s).}
\label{tab:workloads}
\centering
\begin{tabular}{l|r|r|r|r}
\hline
\textbf{Parameter} & \bf {Sense} & \bf {Analysis\tablefootnote{
Orin Nano, Xavier NX, Orin AGX, RTX 3090. Energy use is not applicable (N/A) for fogs as they are not powered by the drone's battery.}} &\makecell{\bf{Comms.}\\\textit{(WiFi, Cellular)}}
&\makecell{\bf{Mobility}\\
\textit{(Fly, Hover)}}\\
\hline\hline
Unit Cost & $1.3e{-4}$ & \makecell{$5.7e{-4}, 4.2e{-4},$ \\$21.6e{-4}, 40e{-4}$} & 0, $1.16e{-8}$ & \makecell{$12.5e{-4},$\\ $11.6e{-4}$} \\
\hline
Energy Use & 0.7 & 15, 15, N/A, N/A & 8.26, 37.5 & 750, 700\\
\hline
\end{tabular}
\end{table}

We implement all schedulers in Python $3.12$ and run them on an Apple M4 Max with $36$~GB of unified memory, under macOS $26$. Each scheduling run is a single-threaded process, and the harness executes up to $10$ such runs concurrently.
We benchmark flying and hovering on a \textit{commercial-grade quadcopter} with an onboard camera~\cite{eCAM80_CUNX} drawing $700$mW and either a Jetson Orin Nano~\cite{orin-tech-specs} or Xavier NX~\cite{nvidia_xavier_nx} as its edge accelerator. Mobile fogs carry a Jetson Orin AGX~\cite{nvidia_orin_agx} and stationary fogs a GeForce RTX 3090~\cite{geforce_rtx_3090} with $10{,}496$ CUDA cores and $24$GB GDDR6X; Table~\ref{tab:computespecs} gives their specifications and Table~\ref{tab:workloads} the unit costs in INR/s, derived from~\cite{raj2024adaptive}.
Drones fly at $v^u = 4$~m/s and buses travel at $v^m = 5.14$~m/s ($\approx 18km/hr$), with a drone battery capacity of $\mathcal{B} = 1350$~kJ based on popular UAVs. The cellular range is $R = 1000$~m for both the UAV--tower link and the bus relay.
Drones enable battery recharge below $\theta = 0.30$ of capacity and divert below a reserve of $\rho = 0.15$. Every configuration is evaluated on $10$ seeded workloads, with the fog and bus infrastructure held fixed so that the workload is the only source of variation.

\subsection{Workloads}\label{sec:workloads}
We benchmark four \textit{agriculture-relevant DNN models} on the \textit{Rice Paddy Dataset}~\cite{petchiammal2023paddy}: MobileNet-v3 (object classification in farms), YOLOv8-m (object detection), ExpansionNetv2 (image captioning) and a re-trained ResNet-34 (disease and pest detection). Each runs on payloads of $0.8$, $8$, $80$, $800$ and $4000$~Mb across the four GPUs. Table~\ref{tab:inference-times} reports the $95^{th}$ percentile inference time per image, over $20$ runs at $4000$~Mb and $1{,}000$ runs otherwise; these populate $\phi_n$. Entries marked $\infty$ exceed onboard memory.
Stationary fogs are placed from $4$G cell-tower data extracted from OpenCelliD~\cite{OpenCelliD}.
Halts come from the published rural bus timetables of the region under study, as a GTFS feed~\cite{BMTC_Official}. Each bus revisits its stops on its schedule and relays through whichever tower its route next reaches.
We use the $5\times5km^2$ region of farmland area for our evaluation, shown in Fig.~\ref{fig:map-5x5}. Every event (trip, stop) in the feed falling inside the box becomes a halt, timed by the published schedule and routed along the feed's road geometry. 
The region we report our results on holds a single stationary fog, $24$ stops, and $160$ halts over $32$ trips, and its cellular coverage reaches only $12.5\%$ of the area. This is the sparse regime the transit tier is meant for. We also build a $10\times10km^2$ box around the same area for an extended study, with $3$ stationary fogs, $81$ stops and $407$ halts over $71$ trips at $8.3\%$ coverage.

\begin{figure}[t]
    \centering
    \subfloat[$5\times5km^2$]{
    \includegraphics[width=0.35\columnwidth]{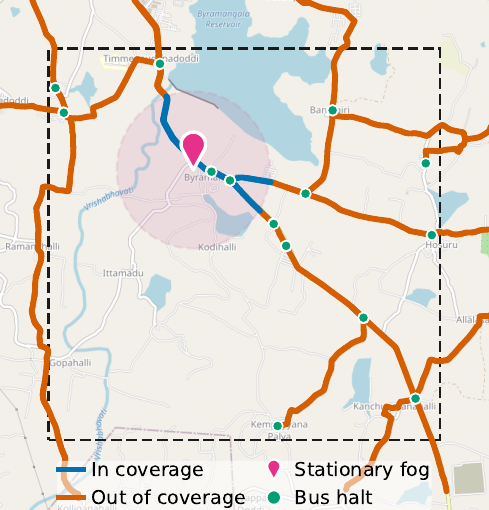}
    \label{fig:map-5x5}
    }~
    \subfloat[$10\times10km^2$]{
    \includegraphics[width=0.35\columnwidth]{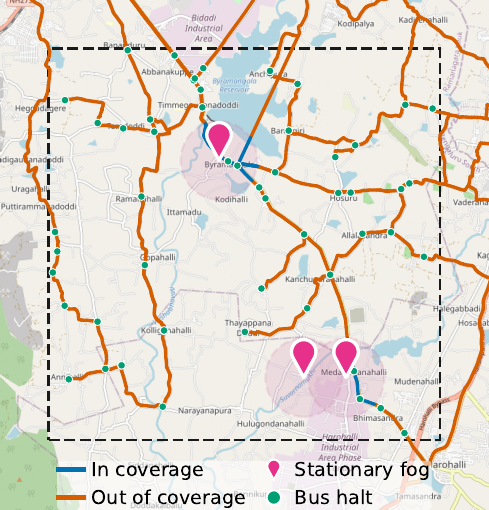}
    \label{fig:map-10x10}
    }  
   \caption{Our study regions on an OSM base map. Pins are the stationary fogs, shaded discs their cellular range, and lines are the bus routes the mobile fogs follow.}
    \label{fig:region-headtohead}
\end{figure}

\begin{table}[t]
    \caption{$95^{th}$ percentile inference time (s) per image frame, by DNN model, GPU device and payload size. $\infty$ marks a model that does not fit on the device.}
    \label{tab:inference-times}
    \centering
    \footnotesize
    \setlength{\tabcolsep}{3pt}
    \begin{tabular}{c c c c c c c}
        \toprule
        & & \multicolumn{5}{c}{ \textbf{Data Payload Sizes (in Mb)} } \\
        \textbf{Device}& \textbf{DNN Model} & \textbf{0.8} & \textbf{8} & \textbf{80} & \textbf{800} & \textbf{4000} \\
        \midrule
     & YOLOv8(m) & 0.24 & 2.22 & 22.40 & 98.36 & $\infty$ \\
        Orin Nano & ExpansionNetv2 & 1.01 & $\infty$ & $\infty$ & $\infty$ & $\infty$ \\
         & MobileNet-v3 & 0.18 & 0.43 & 1.69 & 19.38 & 89.21 \\
         & ResNet-34 & 0.18 & 0.16 & 1.76 & 20.47 & 458.35 \\
         \midrule
         & YOLOv8(m) & 0.30 & 3.11 & 30.53 & 223.45 & $\infty$ \\
       Xavier NX  & ExpansionNetv2 & 1.07 & $\infty$ & $\infty$ & $\infty$ & $\infty$ \\
         & MobileNet-v3 & 0.19 & 1.60 & 2.31 & 23.16 & 107.84 \\
         & ResNet-34 & 0.21 & 1.48 & 2.88 & 29.51 & 390.27 \\
         \midrule
         & YOLOv8(m) & 0.07 & 0.30 & 3.25 & 30.63 & 207.10 \\
        Orin AGX & ExpansionNetv2 & 0.37 & 1.68 & 20.48 & 208.77 & 1071.80 \\
         & MobileNet-v3 & 0.04 & 0.21 & 0.84 & 9.89 & 48.02 \\
         & ResNet-34 & 0.04 & 0.17 & 0.96 & 8.92 & 45.64 \\
         \midrule
         & YOLOv8(m) & 0.02 & 0.12 & 1.47 & 11.44 & 79.02 \\
        RTX 3090 & ExpansionNetv2 & 0.09 & 0.31 & 3.81 & 37.37 & 191.59 \\
         & MobileNet-v3 & 0.01 & 0.04 & 0.40 & 3.90 & 20.19 \\
         & ResNet-34 & 0.01 & 0.04 & 0.42 & 4.16 & 21.70 \\
        \bottomrule
    \end{tabular}
\end{table}

We use the task type ($PD$, $OD$, $PAD$, $OAD$) to evaluate for nine workload cells: $\{10, 20, 50\}$ drones at load factors $x = n/i \in \{2,4,8\}$, i.e., from $20$ tasks on $10$ drones up to $400$ on $50$. A drone takes up to $r_{max} = x$ sub-trips. Waypoints are sampled uniformly at random from the region, and tasks may start at any time within $(0, 240]$~mins. Table~\ref{tab:taskparams} gives the sampled task parameters.

\begin{table}[t]
\small
\centering
\caption{\bf\footnotesize{Sampled task parameters.}}
\label{tab:taskparams}
\vspace{-0.05in}
\begin{tabular}{c|c|c|c|c}
\toprule
$p^s$ & $\delta^s, \delta$ & $\delta - \delta^s$ & $\beta$  & $x$\\
\midrule
$[0.1,500]$MB& $[20,240]$min & $\geq5$min & $[3,5]$ & $[2,4,8]$\\
\bottomrule
\end{tabular}
\end{table}

\begin{figure*}[!t]
\centering
    \subfloat[Pickup Drop (PD)]{%
\includegraphics[width=0.45\columnwidth]{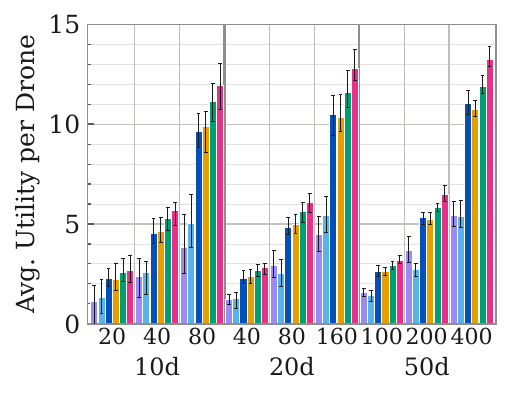}%
   \label{fig:pd11_blr_utility}%
  }%
  \subfloat[Observe Drop (OD)]{%
\includegraphics[width=0.45\columnwidth]{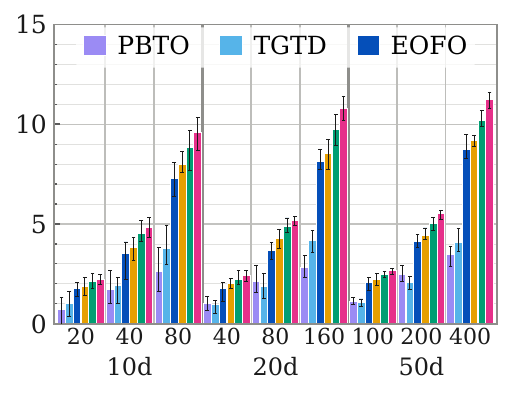}%
   \label{fig:od11_blr_utility}%
  }\\
  \subfloat[Pickup Analyse Drop (PAD)]{%
    \includegraphics[width=0.45\columnwidth]{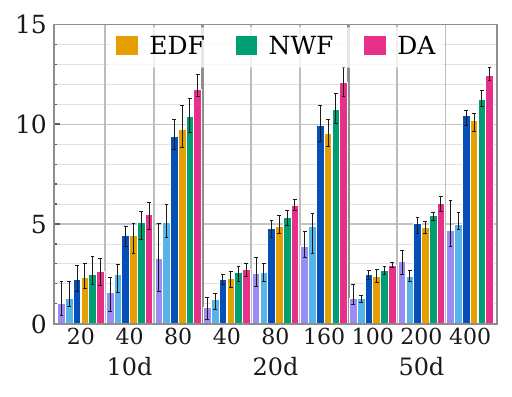}%
   \label{fig:pad11_blr_utility}%
  }%
  \subfloat[Observe Analyse Drop (OAD)]{%
\includegraphics[width=0.45\columnwidth]{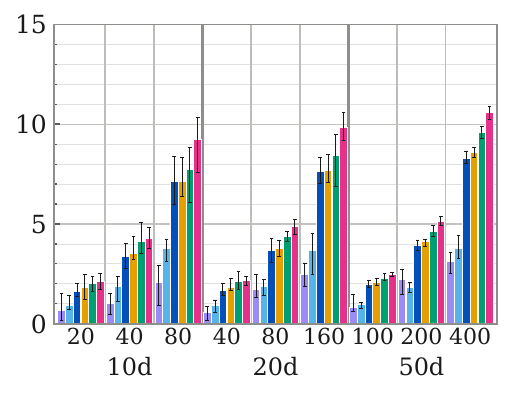}%
   \label{fig:oad11_blr_utility}%
  }
\caption{Comparison of algorithms across different task types 
for 10 seeded iterations.
}
\label{fig:algo-comparison}
\end{figure*}

\subsection{Baseline Schedulers}\label{sec:baselines}
We first compare DA against two simple \textit{heuristics}: \textit{Earliest Deadline First (EDF)}~\cite{stankovic1998deadline} and \textit{Nearest Waypoint First (NWF)}~\cite{perez2013solving}, which sort tasks by ascending deadline and by distance from the depot respectively, then give each task to the \textit{first} UAV satisfying the energy and deadline constraints, where DA scans every UAV and keeps the cheapest. Everything else is DA's intelligence.
The baselines, therefore, differ from DA only in task ordering and the first-fit UAV choice, thereby isolating those two effects.

We further compare DA against three \textit{SOTA} schedulers, each ported from their respective papers' decision rules and then scored using \textit{our} utility, cost, and deadline accounting.
Park et al.'s successive convex approximation (SCA) scheme~\cite{park2024public}, which we label \textit{PBTO}, is the only prior work that offloads UAV computation to public buses. We port its bus selection: candidate buses are admitted incrementally until the reduction in system cost falls below a threshold (their Alg.~1), and each is scored 
over task delay and UAV energy (their Eq.~18, Table~II).
We change two things. Their UAV stays in one place, so we use DA's routing. 
A stationary fog is used only when no halt meets the deadline, as in their own edge-server variant.
\textit{Earliest Observation First with On-time processing (EOFO)}~\cite{10146342}, co-schedules routes with onboard analytics. It sorts activities by non-descending earliest observation start time and, among the drones on which an activity is feasible, takes the one of \textit{lowest incremental energy} rather than lowest cost.
\textit{TGTD}~\cite{li2026latency} is a two-stage framework that combines heuristic search with deep reinforcement learning to jointly plan a UAV's visiting order and its offloading. We port its first stage, a heuristic search over task orderings that favors shorter routes, stay near a fog where the UAV can recharge, and reach urgent tasks early; its second stage sends each task to whichever finishes sooner: the UAV's accelerator or the nearest stationary fog. Their offload fraction $\alpha$ becomes all-or-nothing, because a task is a single inference.

\subsection{Comparison with Baseline Schedulers}\label{sec:analysis-1}

\begin{figure}[t]
    \centering
    \includegraphics[width=0.5\linewidth]{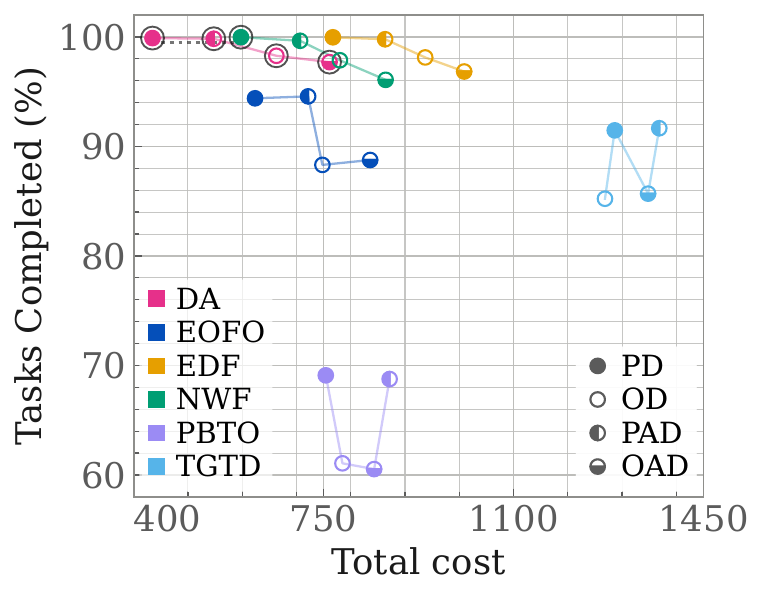}
    \caption{Cost-completion Pareto view for $5\times5km^2$ region, average of $10$ seeded instances, summed over $9$ workloads.}
    \label{fig:pareto-cost}
\end{figure}

Fig.~\ref{fig:algo-comparison} reports the average utility per drone for all six schedulers across the four task types and nine workloads, with each bar being the mean of ten seeded replicates, and whiskers spanning the minimum and maximum.
As we observe, DA achieves the highest utility across all $36$ bars, up to $20\%$ higher than the strongest baseline (NWF). Moreover, when measured against the strongest SOTA scheduler (EOFO) in each cell, the lead grows up to $41\%$, with EOFO attaining $18$--$24\%$ less utility than DA on aggregate. EOFO drops off only at the depot, so that margin reflects DA's access to the transit tier as well as its strategy. The ordering behind it is stable: NWF is consistently the closest competitor ($6.22$ against DA's $6.89$ for PAD), EOFO and EDF sit a further step back and change places between task types, and the two SOTA ports are less than half of DA. 
The margin over NWF increases with load factor; i.e., for PAD, the gap over NWF widens from $5.2\%$ at $20T10d$ to $10.9\%$ at $400T50d$, which makes DA scalable.

The two SOTA ports separate for reasons specific to their decision rules rather than to tuning. PBTO ranks candidate halts on a nearest-first basis, so it is possible that it will commit a UAV to a stop whose bus does not re-enter the cellular range before the task deadline, Data are handed over but never dropped off, and the task is lost. TGTD has no mobile fog tier at all, so in a region where $87.5\%$ of the area is out of tower range, it must route every delivery back to one stationary fog.

A higher utility could, in principle, be achieved by spending more or by quietly abandoning the hard tasks, so Fig.~\ref{fig:pareto-cost} plots total cost against on-time completion for the same runs. Cost is lower-better and completion higher-better, so the best corner is top-left, and the ringed points joined by the dotted staircase are the \textit{Pareto-optimal} ones, those that no other scheduler beats on both axes at once. DA is Pareto-optimal in every task type, and in OD, PAD, and OAD it is the \textit{only} non-dominated point, so no baseline is at once cheaper and more complete; PD is the single task type it shares, with NWF. We show that DA is the cheapest scheduler across all four task types and is within $0.1\%$ of the best completion.
Task completion is also what most sharply separates the SOTA ports, with PBTO finishing only $61$--$69\%$ of tasks on time compared to DA's $98$--$100\%$.

\subsection{Value of Each Infrastructure Tier}\label{sec:analysis-ablation}

\begin{figure}[t]
    \centering
    \includegraphics[width=0.7\linewidth]{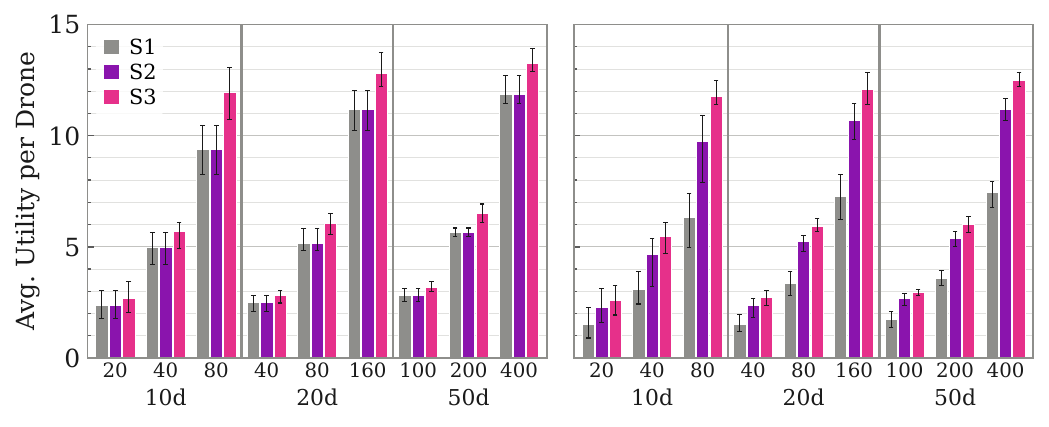}
\caption{Infrastructure ablation of DA 
    for PD (left) and PAD (right). 
    }
    \label{fig:da-ablation}
\end{figure}

\begin{figure}
    \centering
    \includegraphics[width=0.5\linewidth]{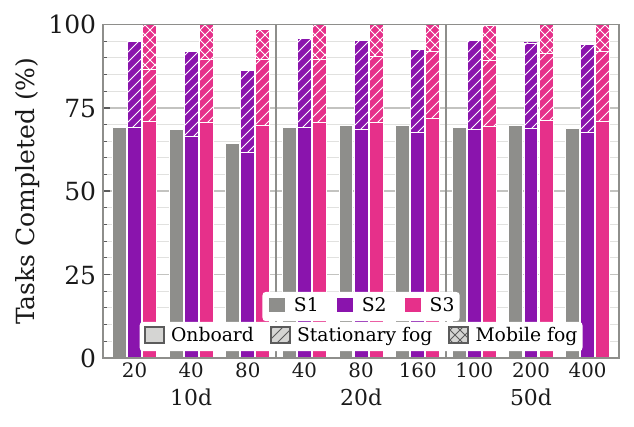}
    \caption{Task completion for the same ablation and region as Fig.~\ref{fig:da-ablation} for PAD. Hatched segments partition it into analysis performed onboard, at a stationary fog, or at a mobile fog.}
    \label{fig:da-ablation-analysis}
\end{figure}

\begin{figure}[t]
    \centering
    \includegraphics[width=0.7\linewidth]{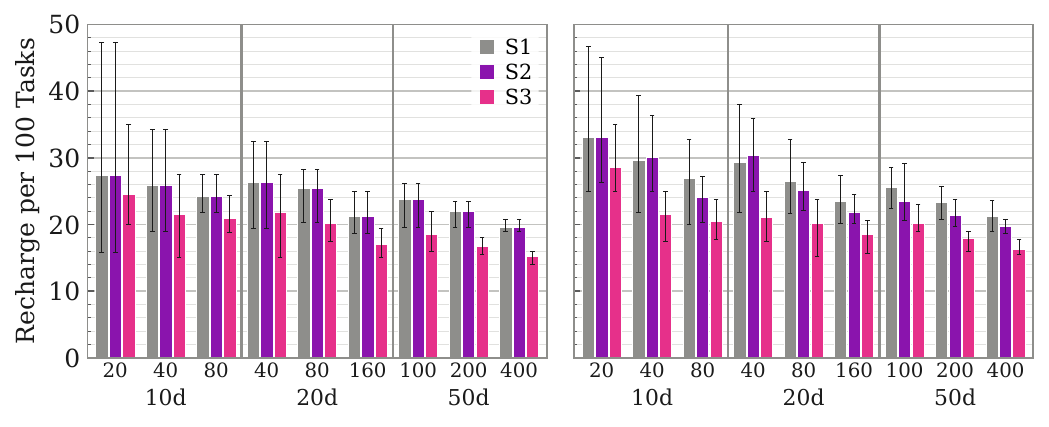}
    \caption{Recharging cycles per $100$ completed tasks for the same ablation and region as Fig.~\ref{fig:da-ablation}, for PD (left) and PAD (right).}
    \label{fig:da-ablation-swaps}
\end{figure}

The comparisons so far hold the infrastructure fixed and vary the scheduler. Fig.~\ref{fig:da-ablation} does the reverse, running DA over three variants of the environment that each add one capability: S1 runs every analysis onboard, S2 switches the stationary fog compute on, and S3 adds the transit tier. The stationary fogs remain drop-off and recharge points in all three, so each step isolates one tier rather than changing the infrastructure count.
The stationary tier is what makes analysis tasks viable. Going from S1 to S2 raises PAD utility per drone from $3.98$ to $6.02$, a gain of $51\%$, and on-time completion from $68.7\%$ to $93.3\%$. PD is unchanged because this region has a single stationary fog, and a PD task carries no analysis to offload; the asymmetry is exactly the cost of performing inference onboard, at the expense of flight time and energy.

The transit tier then adds a further $16.1\%$ for PD and $14.4\%$ for PAD, taking completion to $100\%$ in both. Its contribution is, however, not only computational but also logistical. Fig.~\ref{fig:da-ablation-analysis} shows that buses perform $9\%$ of the PAD analysis in S3, while the onboard share is essentially unchanged from S2; what changes is that $45\%$ of PAD drop-offs and $71\%$ of PD drop-offs move to a bus. Buses, therefore, act as drop-off points for the UAV in addition to serving as a compute tier, which is why the gain appears in PD as strongly as in PAD. Consistent with this, energy per completed task falls from S2 to S3 by $17\%$, and recharge cycles per $100$ completed tasks fall by up to $31\%$ (Fig.~\ref{fig:da-ablation-swaps}); shorter flights to a nearer drop-off point leave the UAV with more charge in hand.

\subsection{Adapting to Traffic Variability}\label{sec:traffic-variability}

We now discuss the performance of ADA relative to DA under traffic variability. 
We inject a delay of $U(0, 300]$~s into the arrival of every halt, drawn once per halt and held fixed across the drones that target it. A late bus still serves its halt, so $t^h_e$ and $D(c)$ shift with $t^h_s$.
Crucially, DA still \textit{plans} on the published timetable. The delay surfaces only at execution, $\tau$ seconds before the UAV would reach the stop.
Fig.~\ref{fig:traffic-ada} compares the results for $400T50d$ over $10$ seeds. \textit{DA normal} is the undelayed baseline and \textit{DA failure} is DA meeting the delay with no recourse. 
The unprotected loss is severe. DA failure gives up $41\%$ of utility for PD and $22\%$ for PAD (Fig.~\ref{fig:utility-ada}), with completion falling from $100\%$ to $67\%$ and $82\%$ respectively. PD suffers more because it relies more heavily on the transit tier, with $62\%$ of its deliveries against $36\%$ for PAD (Fig.~\ref{fig:task-ada}). 
ADA Local recovers part of this and saturates almost at once. 
More notice does not help because the assignment was made earlier: a UAV that has flown to a distant cluster on the promise of a nearby bus cannot re-choose its way out, whatever it learns en route. 
Consistent with this, the mobile fog stack barely recovers under ADA Local, $33\%\rightarrow34\%$ for PD, and the gain instead comes from diverting to a stationary fog. ADA Global, in contrast, maintains completion above $97\%$ and recovers up to $93\%$ of the lost utility. As we observe, the mobile share \textit{rises}, $33\%\rightarrow46\%$ for PD and $18\%\rightarrow28\%$ for PAD.

\begin{figure}[t]
    \centering
    \subfloat[Average Utility per Drone.]{
    \includegraphics[width=0.4\linewidth]{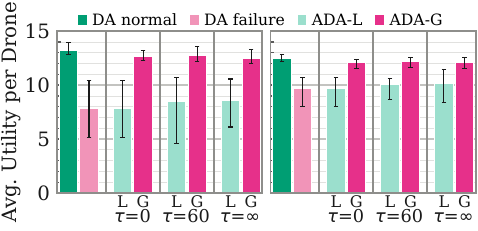}
    \label{fig:utility-ada}}
    \subfloat[Task completion; stacks show where each task was delivered.]{\includegraphics[width=0.4\linewidth]{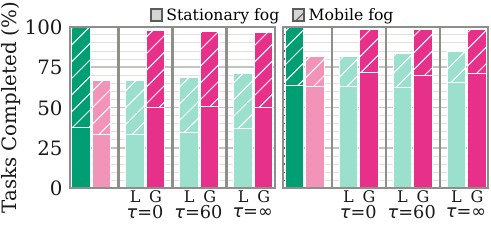}
    \label{fig:task-ada}}
    \caption{Adapting to traffic variability, 400T50d, PD (left), PAD (right).}
    \label{fig:traffic-ada}
\end{figure}

\subsection{Scheduler Runtime}\label{sec:analysis-runtime}

\begin{table}[t]
\small
\setlength{\tabcolsep}{4pt}
\caption{Median latency (s) to schedule $9$ workload each for $4$ task types, and $10$ seeds ($360$ runs per scheduler).}
\label{tab:runtime}
\centering
\begin{tabular}{l|r|r|r|r|r|r}
\toprule
& \bf{DA} & \bf{EOFO} & \bf{EDF} & \bf{NWF} & \bf{PBTO} & \bf{TGTD} \\
\midrule
Runtime (s) & $0.039$ & $0.005$ & $0.065$ & $0.068$ & $0.072$ & $0.136$ \\
\bottomrule
\end{tabular}
\end{table}

The bound in \S\ref{sec:heuristics} is dominated by the $\mathcal{O}(|\tau||\mathds{U}||\mathds{M}|)$ term.
Table~\ref{tab:runtime} reports what that costs in practice on the hardware of \S\ref{sec:evaluation}.
EOFO is the fastest scheduler at $0.005$~s, which is unsurprising: it is depot-only, so it never evaluates a halt and never pays the $|\mathds{M}|$ term. It buys that with $18$--$24\%$ less utility than DA on aggregate
(\S\ref{sec:analysis-1}). Among the schedulers that do reason about the transit tier, DA is the fastest at $0.039$~s, ahead of EDF, NWF, PBTO, and TGTD, despite being the only one that evaluates every UAV for every task; the clustering step keeps the candidate set small enough that the worst case above is never realized.

\section{Discussion}\label{sec:discussion}

Repeating our study on the $10\times10km^2$ (Fig.~\ref{fig:map-10x10}), four times the area at $8.3\%$ coverage, DA completes $57\%$ of PAD tasks, and every scheduler lands within a few points of the same ceiling. That ceiling is set by endurance rather than by scheduling: a full pack buys $1{,}800$~s of flight time at $v^u=4$~m/s, or $7.2$~km, and $43\%$ of tasks in the larger box lie far enough out that the depot-to-waypoint-to-drop-off trip cannot be flown on a single pack. Once range, rather than placement, determines the outcome, the schedulers converge. 
Restoring the platform restores the result: at $v^u=12$~m/s, or with twice the battery, completion returns to $100\%$ for DA.

Moreover, the same design and algorithms we propose can generalize to applications where deadline-bound sensing and analysis operate over regions with sparse connectivity and can exploit a timetabled vehicle that already traverses the area. For instance, powerline, rail, and pipeline inspection, post-disaster damage assessment, and forest or wildlife monitoring.

\section{Conclusion}\label{sec:conclusion}
We formulated the Mission Scheduling Problem, which co-schedules UAV routes with analytics placement across onboard, stationary and transit-borne fog, and solved it with the DA heuristic. Our work can be extended to scenarios where drones not only execute tasks but also collect data that dynamically triggers new tasks. Incorporating real-time task generation and adaptive scheduling would enhance the system’s responsiveness to evolving mission requirements.

\bibliographystyle{unsrt} 
\bibliography{references}

\end{document}